\newtheorem{theorem}{Theorem}
\newtheorem{definition}[theorem]{Definition}
\newtheorem{lemma}[theorem]{Lemma}
\newtheorem{corollary}[theorem]{Corollary}
\newcommand{\RR}{\mathbb{R}}
\renewcommand{\Pr}[2][]{\mbox{\rm\bf Pr}_{#1}\hspace{-0.03cm}\left[#2\right]}
\newcommand{\eps}{\varepsilon}
\newcommand{\diam}{\text{diam}}
\newcommand{\cen}{\text{cen}}
\newcommand{\cost}{\text{cost}}
\newcommand{\opt}{\text{opt}}
\newcommand{\kk}{\texorpdfstring{$k$}{k}}
\title{Fully dynamic hierarchical diameter $k$-clustering and \texorpdfstring{$k$}{k}-center}
\author{Melanie Schmidt\\
  University of Bonn\\
  Germany
  \and
  Christian Sohler \\
  Google Research \\
  Switzerland
}
\begin{document}

\maketitle

\begin{abstract}
  We develop dynamic data structures for maintaining a hierarchical $k$-center clustering when the points come from a discrete space $\{1,\dots,\Delta\}^d$.
  Our first data structure is for the low dimensional setting, i.e., $d$ is a constant, and processes insertions, deletions and cluster representative queries in
  $\log^{O(1)} (\Delta n)$ time, where $n$ is the current size of the point set.
  For the high dimensional case and an integer parameter $\ell \ge 1$, we provide a randomized data structure that maintains an $O(d\ell)$-approximation. The amortized expected insertion time
  is $O(d^2 \ell \log n \log \Delta)$. The amortized expected deletion time is $O(d^2 n^{1/\ell} \log^2 n \log \Delta)$. At any point of time, with probability at least $1-1/n$ the data
  structure can correctly answer all queries for cluster representatives in $O(d\ell \log n \log \Delta)$ time per query.
\end{abstract}

\section{Introduction}

Clustering is a well-studied area in the intersection between combinatorial optimization, algorithms and theory, and machine learning. It seeks to find structure in data by grouping data points into clusters and trying to optimize
a given objective function that defines the clustering problem at hand. One of the best studied clustering problems is the $k$-center problem: Given a set of points $P$ in a metric space, choose $k$ centers such that the maximum
distance of any point to its closest center is minimized. The approximability of $k$-center is well-studied. It is NP-hard to find a $(2-\eps)$-approximation for any constant $\epsilon >0$~\cite{HN79}, and two elegant $2$-approximations are
known~\cite{G85,HS86}.

Gonzalez~\cite{G85} picks the first center arbitrarily and then always chooses a point at maximum distance to the previously chosen centers. This is called \emph{farthest first traversal} and a short proof shows that it gives a $2$-approximation.
Hochbaum and Shmoys~\cite{HS86} give a slightly more complicated, but also more versatile algorithm. At first they observe that one can guess the optimum value since it has to be one of the $O(n^2)$ pairwise distances between the points in $P$. Then they give an algorithm for one guess $\tau$. It computes a maximal independent set $I$ in the graph where two points are connected iff their distance is at most $2\tau$. If $|I|\le k$, then $I$ defines a $k$-center solution of cost $2\tau$. And if $\tau$ was guessed correctly, then $|I|$ cannot be larger than $k$: In a $k$-center solution with radius $\tau$, two points have distance $> 2\tau$ only if they are in different clusters, so there cannot be more than $k$ points with pairwise distance $> 2 \tau$.

A close relative of the $k$-center problem is the diameter $k$-clustering problem that is defined without the need for centers: Given a set of points $P$ in a metric space, find a partitioning of $P$ into $k$ clusters such that the maximum diameter is minimized. 
Both algorithms above can be adapted to give a $2$-approximation for this problem as well. 

The complexity of approximating these problems has been determined in the eighties, yet a lot of work was built on top of those basic results.
Indeed, the $k$-center problem seems to function as a first step in attacking new challenges for clustering problems in general.
Recent work includes solving capacitated clustering where clusters or centers have an upper bound on the number of points that can be assigned (first results gave a $6$-approximation for uniform capacities~\cite{BKP93,KS00},
best known so far is a $9$-approximation for arbitrary capacities~\cite{ABCGMS15}), a $2$-approximation has been obtained when there are lower bounds on the number of points per cluster \cite{CGK16}, and when allowing outliers \cite{CGK16}. 
Other constraints under which the $k$-center problem has been studied include fault tolerance~\cite{KPS00}, matroid or knapsack constraints~\cite{CLLW16}, diversity~\cite{LYZ10} and fairness~\cite{CKLV17}.

Here, we are most interested in a different type of challenge: \emph{Hierarchical} clustering. A hierarchical clustering consists of $n$ partitionings of $P$ that are \emph{nested}, i.e., for any $1 < i \le n$, the $(i-1)$-partitioning results from the
$i$-partitioning by merging two of the clusters. Hierarchical clustering is a popular data analysis method and there is a simple greedy algorithm called agglomerative clustering for it: Starting with $n$ singleton clusters,
successively merge the two clusters that result in a cluster of minimum radius. This is a greedy algorithm since it minimizes the radius in the next step. 

It is not even completely clear how we should evaluate the quality of a hierarchical clustering, yet one possibility is to compute its \emph{pointwise} approximation ratio by comparing the $i$-clustering with an optimal $i$-clustering for any $i\in [n]$ and taking the maximum. 
Dasgupta and Long~\cite{DL05} were the first to study hierarchical $k$-center in this model. They show that agglomerative clustering is at best $\Omega(\log k)$-pointwise approximate. Furthermore, they observed that the algorithm by Gonzalez computes an incremental clustering (centers are only added when aiming for higher $k$), but not a hierarchical clustering and develop an extension of the algorithm that computes an $8$-pointwise approximate hierarchical $k$-center clustering. This is still the best result known; a newer different line of work for $k$-median~\cite{LNRW10} can be applied to $k$-center, too, yet also yields an $8$-pointwise approximation.

At first glance, a hierarchical clustering is a cumbersome object: It consists of $n$ different clusterings, and thinking of the algorithm by Gonzalez, inserting a point or deleting a point could change the incremental clustering completely (and thus, also the hierarchical clustering computed by Dasgupta and Long). In this work, we make a surprising observation: It is possible to design a \emph{fully dynamic} data structure for the $k$-center problem, if the underlying metric space is the Euclidean space $\mathbb{R}^d$.
Our data structure supports insertions, deletions and queries for cluster membership that return a cluster representative. 
To achieve this, we build on the $k$-center algorithm in the streaming setting by McCutchen and Khuller~\cite{MK08} (which in turn implicitly builds upon the algorithm by Hochbaum and Shmoys \cite{HS86}). This streaming algorithm does not explicitly compute a hierarchical clustering, yet we observe in Section~\ref{sec:truncateddendrogram} how to use ideas from ~\cite{MK08} to get a hierarchical clustering.
For this we define a concept of $\alpha$-good set families which basically define a truncated
hierarchical clustering which is pointwise approximate in Section~\ref{sec:truncateddendrogram}. Then the main contribution of this paper is that we maintain such good set families in dynamic settings. First, we consider the $k$-center problem and the diameter $k$-clustering problem in Euclidean space of low dimension, i.e., for constant $d$, and achieve the following result in Section~\ref{sec:lowdim}, where we assume that the points come from the discrete space $\{1,\dots,\Delta\}^d$. The latter assumption can be seen as a way to phrase the quality
in terms of the \emph{spread} of the point set, i.e., the quotient of the largest and smallest pairwise distance.

\begin{restatable}{theorem}{thmLowDim}
  Let $d$ be a constant. There is a data structure for the hierarchical $k$-center or diameter $k$-clustering problem when points come from the discrete space $\{1,\dots,\Delta\}^d$
  that maintains a $16$-approximation with expected amortized insertion time $\mathcal{O}_d(\log \Delta \log n)$ and expected amortized  deletion time $\mathcal{O}_d(\log^2\Delta\log n)$.
  At any point of time, we can query the cluster representative of any point in the data structure in time $\mathcal{O}_d(\log \Delta+\log n)$ time.
\end{restatable}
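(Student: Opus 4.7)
The plan is to maintain, at $L = O(\log(\Delta\sqrt{d}))$ geometric scales $\tau_i = 2^i$, a nested family of representative sets $R_L \subseteq \cdots \subseteq R_0$ that form an $\alpha$-good set family in the sense of Section~\ref{sec:truncateddendrogram} with $\alpha=16$. For each scale $i$ I would overlay an axis-aligned grid $G_i$ of side length $2^i$ on $\{1,\ldots,\Delta\}^d$, with the grids at consecutive scales aligned so that each level-$(i{+}1)$ cell is the disjoint union of $2^d$ level-$i$ cells; then I would maintain a hash table keyed by cell that stores, for every non-empty cell, the list of points currently in it together with one distinguished representative. The nesting $R_{i+1}\subseteq R_i$ would be enforced by the invariant that the representative of a level-$(i{+}1)$ cell is simultaneously the representative, at every finer level $j \le i$, of the unique cell of $G_j$ containing it. Because $d$ is constant, any ball of radius $O(2^i)$ meets only $\mathcal{O}_d(1)$ cells of $G_i$, so at the smallest scale $i$ for which $|R_i| \le k$ the partition induced by the non-empty cells of $G_i$ is a $k$-clustering whose diameter is within a constant factor of the optimum; plugging this across all scales into the good-set-family framework yields the claimed $16$-approximation.

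For an insertion of $p$ I would walk through all $L$ scales, compute in $O(d)$ time the cell of $G_i$ containing $p$, and update the hash tables. If the cell was empty, $p$ becomes its representative at this scale and, by the invariant, at every coarser scale whose containing cell was previously empty. Hashing contributes $\mathcal{O}(\log n)$ expected time per scale, yielding the claimed $\mathcal{O}_d(\log\Delta \log n)$ expected amortized insertion cost. Deletion is the main difficulty: when $p$ is deleted, at every scale $i$ we remove $p$ from its cell $c$, and if $p$ was the representative of $c$ and $c$ is still non-empty we must install a replacement. To make the amortized analysis clean, I would attach to each point a uniformly random priority tag at insertion time and define the representative of a cell as the point of smallest tag in it, so that a replacement is read off in $\mathcal{O}(\log n)$ expected time from a balanced BST keyed by tag. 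A single deletion can then trigger a cascade that forces the nested invariant to propagate up through at most all $L$ scales, giving expected amortized cost $\mathcal{O}_d(\log^2\Delta \log n)$.

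A query for a point $p$ scans the $O(\log \Delta)$ levels to find the smallest $i$ with $|R_i|\le k$ using incrementally maintained per-level counters, then returns the representative of the cell of $G_i$ containing $p$, for a total of $\mathcal{O}_d(\log\Delta + \log n)$. The principal obstacle I anticipate is precisely the deletion cascade: one must argue that the random-tag mechanism really does amortize to $\mathcal{O}_d(\log^2\Delta\log n)$, ruling out adversarial request sequences where the same deletion ``unluckily'' forces many expensive re-choices at many levels simultaneously; the key point is that for a fixed point the event ``I am the minimum-tag element of my level-$i$ cell'' has probability inversely proportional to the cell's current size, which should aggregate to the claimed bound. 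Once that step is settled, verifying the approximation constant $16$ from the nested-grid construction reduces to a routine geometric calculation inside the good-set-family framework already established in Section~\ref{sec:truncateddendrogram}.
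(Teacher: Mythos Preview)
Your plan has a genuine gap in the approximation argument. By taking $R_i$ to be the set of cell representatives of a fixed axis-aligned grid $G_i$ of side $2^i$, you lose condition~(b) of Definition~\ref{def:truncated}: two representatives coming from \emph{adjacent} cells can be arbitrarily close. Concretely, in dimension $1$ with $k=1$, place two points at $2^i-\tfrac{1}{10}$ and $2^i+\tfrac{1}{10}$. Since $2^i$ is a cell boundary at every scale $j\le i$, you have $|R_j|=2>k$ for all $j\le i$, and the first scale with $|R_j|\le 1$ is $j=i+1$; the induced cluster then has diameter $\Theta(2^{i+1})$ while $\opt^1_{\diam}=0.2$, so the ratio is unbounded. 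The good-set-family framework of Section~\ref{sec:truncateddendrogram} cannot rescue this, because its lower bound on $\opt^k$ (Lemma~\ref{lem:approxguarantee}) is derived precisely from the pairwise separation in~(b), which your $R_i$ do not have. Relatedly, you state the family is $\alpha$-good with $\alpha=16$; note the framework gives an $8\alpha$-approximation, so to reach the claimed factor $16$ you would need $\alpha\le 2$, not $16$.

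The paper's construction looks superficially similar but uses the grid in a different role: the grid (with cell diameter $2^i/\sqrt{d}$) is only a \emph{proximity search structure}. When inserting $p$ at level $i$, one queries the $O_d(1)$ cells that could contain a point within distance $2^i$ of $p$; if such a point exists, $p$ is \emph{not} added to $P_i$ and instead gets a parent pointer to it. This is what enforces condition~(b). On deletion, the paper does not rely on random tags at all: because every child of $p$ at level $i$ lies in $P_{i-1}$ (hence pairwise $>2^{i-1}$ apart) and within distance $\alpha\,2^i$ of $p$, a packing argument bounds the number of children by $O_d(1)$; re-inserting each child costs $O_d(\log\Delta\log n)$, and across the $O(\log\Delta)$ levels this yields the $O_d(\log^2\Delta\log n)$ deletion bound deterministically (up to hashing). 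Your random-priority scheme is therefore unnecessary once the separation invariant is maintained, and the step you flagged as the ``principal obstacle'' disappears.
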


As a comparison, Chan et al.~\cite{CGS18} design a fully dynamic algorithm for $k$-center, also by extending the algorithm by Hochbaum and Shmoys. The algorithm  works for a pre-specified $k$ given in advance and computes a $(2+\epsilon)$-approximation for the $k$-center problem. With constant probability, it has an amortized update time of $\mathcal{O}(k^2 \epsilon^{-1} \log \Delta )$. 
In constant dimension, our algorithm achieves competitive (slightly worse) update times, yet maintains a complete clustering hierarchy in this time instead of only one clustering.
We then proceed to the high dimensional case, showing the following result in Section~\ref{sec:highdim}.

\begin{restatable}{theorem}{thmHighDim}
  Let $\ell \ge 1$ be an integer. 
  There is a data structure for the hierarchical diameter clustering that maintains an $O(d\ell)$-approximation under insertions and deletions of points from $\{1,\dots,\Delta\}^d$. The amortized expected insertion time is
  $O(d^2 \ell \log n \log \Delta)$. The amortized expected deletion time is $O(d^2 n^{1/\ell} \log^2 n \log \Delta)$. At any point of time, with probability at least $1-1/n$ the data structure can correctly answer all queries 
  for cluster representatives in $O(d\ell \log n \log \Delta)$ time per query.
\end{restatable}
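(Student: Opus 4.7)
The plan is to extend the low-dimensional construction to high dimension by replacing grid lookups with a locality-sensitive hashing scheme whose approximation/time trade-off is governed by $\ell$. Concretely, for each of the $O(\log \Delta)$ relevant distance scales $r_i = 2^i$, I would maintain an $\alpha$-good set family in the sense of Section~\ref{sec:truncateddendrogram} with $\alpha = O(d\ell)$, together with $\ell$ independent randomly shifted axis-aligned grids per scale whose cell side length is proportional to $d\ell \cdot r_i$. Two points at distance at most $r_i$ land in a common cell of any given grid with constant probability, hence in at least one of the $\ell$ grids with probability $1 - 2^{-\Omega(\ell)}$; conversely, two points at distance substantially exceeding $d\ell\, r_i$ collide in a given grid with probability $O(1/\ell)$, so the expected number of spurious collisions per hashed point is $O(1)$.

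For insertion of a point $p$ I would compute $p$'s bucket in each of the $\ell$ grids at each of the $O(\log\Delta)$ scales, an $O(d\ell \log\Delta)$ hashing cost, and at every scale scan the occupied buckets containing $p$ for an existing cluster representative within distance $O(d\ell\, r_i)$. If one is found, $p$ is attached to it; otherwise $p$ becomes a new representative at scale $r_i$ and is promoted upward until it is absorbed by a coarser representative or reaches the root. Each candidate requires an $O(d)$ distance check, and by the collision analysis above the expected number of candidates per bucket is $O(1)$; combined with amortizing the promotion over $O(\log n)$ scales this yields expected insertion time $O(d^2 \ell \log n \log \Delta)$, and the resulting set family is $O(d\ell)$-good with high probability.

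Deletion is the real challenge. Removing a representative $q$ forces reassignment of the points it clustered, and possibly rebuilding of the part of the hierarchy that had $q$ as representative at coarser scales. I would locate candidate replacements by scanning the $\ell$ buckets containing $q$ at each affected scale; the quantitative observation that drives the bound is that, by calibration of the cell size, each bucket holds only $O(n^{1/\ell})$ points in expectation, so each scan costs $O(d\, n^{1/\ell})$. A replacement must then inherit each orphaned point, which may cascade to coarser scales, but a potential argument charging the cost of rebuilding a subtree to the insertions that populated it limits the affected levels to $O(\log n)$ amortized. Summed over scales, grids, and the amortized rebuild work this gives expected deletion time $O(d^2 n^{1/\ell} \log^2 n \log \Delta)$.

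Queries are answered by descending the hierarchy from the root: at each of the $O(\log n)$ levels I locate $p$'s stored bucket across the $\ell$ grids and read off the current representative, for $O(d\ell \log n \log \Delta)$ total per query. The simultaneous correctness of all queries with probability at least $1 - 1/n$ follows from a union bound over the $n$ points and the polylogarithmic number of LSH ``bad events'' per scale, each of probability $n^{-\Omega(1)}$ after amplification by the $\ell$ independent grids. The hard part will be the deletion analysis: designing a clean potential that simultaneously bounds the amortized replacement work at each scale and the depth of cascades across scales so that the $n^{1/\ell}$ factor appears only once, and verifying that the interaction between independent grids and the $\alpha$-good family invariant is preserved under arbitrary deletion sequences.
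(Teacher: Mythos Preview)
Your proposal conflates two distinct mechanisms and, as a result, the central $n^{1/\ell}$ bound is unjustified. In the paper, $\ell$ is \emph{not} the number of grids. For each scale $i$ the paper uses $g=\Theta(\log n)$ randomly shifted grids of cell width $2\sqrt{d}\cdot 2^i$; this many grids are needed so that every close pair collides in some grid with probability $1-1/n$ after a union bound over all $O(n^2)$ pairs. With only $\ell$ grids, your success probability is $1-2^{-\Omega(\ell)}$, a constant, and no union bound over $n$ points yields $1-1/n$.

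The role of $\ell$ in the paper is entirely different: it is the depth of a \emph{random subsampling hierarchy} $P_{i-1}=P_{i,0}\supseteq P_{i,1}\supseteq\cdots\supseteq P_{i,\ell}$, each level keeping a point with probability $n^{-1/\ell}$. A point is ``covered'' if a deeper-sample point shares a cell with it; uncovered points at each sublevel form a maximal independent set $I_{i,j}$, and $P_i=\bigcup_j I_{i,j}$. The key observation is that a cell with $N$ points in $P_{i,j}$ fails to contain any point of $P_{i,j+1}$ with probability at most $e^{-N/(2n^{1/\ell})}$, so the expected work to scan a cell only when it is uncovered is $O(n^{1/\ell})$. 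This is the sole source of the $n^{1/\ell}$ factor in the deletion time. Your claim that ``each bucket holds only $O(n^{1/\ell})$ points in expectation'' has no mechanism behind it: nothing in an LSH with $\ell$ grids prevents all $n$ points from landing in one bucket, and no potential function will manufacture this bound. The $O(d\ell)$ approximation in the paper then arises not from a cell of width $d\ell\cdot 2^i$, but from chaining $\ell$ hops of diameter $\sqrt{d}\cdot R=2d\cdot 2^i$ through the subsample levels. You are missing the subsampling idea entirely, and without it neither the correctness probability nor the deletion bound goes through.
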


In this case, the deletion time (but not the insertion time) is no longer polylogarithmic in $\Delta$. Yet we can decrease the dependence on $n$ to $n^\epsilon$ for any constant $\epsilon$, still maintaining a constant-factor approximation for every level in the hierarchy.
Also, our result implies that a hierarchical clustering can be computed in time $O(n d \log n \log \Delta)$ (using only insertion operations), which is much faster than for example the algorithm by Dasgupta and Long (even computing the incremental clustering in the
beginning takes time $O(n^2 d)$).

\paragraph{Additional related work.} In this paper, we consider \emph{Euclidean} $k$-center where the underlying metric is the Euclidean space; in this case, the lower bounds on the approximation ratios for $k$-center and diameter-$k$-clustering are not completely tight,
but it is known that finding a $1.82$-approximation for the $k$-center problem and a $1.96$-approximation for the diameter $k$-clustering problem is NP-hard~\cite{FG88}. 

We will discuss the streaming algorithm by McCutchen and Kuhller in more detail in Section~\ref{sec:truncateddendrogram}. It provides an $8$-approximation for the $k$-center problem for on fixed $k$ in the streaming setting, while only storing at most $k+1$ points at any point in time.

Cohen-Addad et al.~\cite{CASS16} study the $k$-center problem in a different streaming setting, the sliding window model. In this model,
the goal is to maintain a solution that is always an approximation for the $N$ most recent points in the stream. 
For the metric diameter problem, they give a $(3+\epsilon)$-approximation that stores $\mathcal{O}(\epsilon^{-1}\log \Delta)$ points and updates in time $\mathcal{O}(\epsilon^{-1} \log \Delta)$, and for the $k$-center problem, they provide a $(6+\epsilon)$-approximation
storing $\mathcal{O}(k \epsilon^{-1} \log \Delta)$ points and updates in time $\mathcal{O}(k^2 \epsilon^{-1} \log \Delta)$. The number of clusters $k$ has to be specified in advance.

There is also a line of work considering the running time and space complexity of agglomerative clustering. 
The starting point of this is that in its standard form, agglomerative clustering requires $O(n^2)$ time and space.
Eppstein~\cite{E00} shows that agglomerative clustering can be performed in space $O(n)$ in time $O(n^2 \log^2 n)$. 
Cochez and Mou~\cite{CM15} and Gilpin et al.~\cite{GQD13} achieve an approximate clustering in the sense that it is an
approximation of agglomerative clustering. They achieve a linear time algorithm which requires linear space.

\subsection{Preliminaries}

We assume that our points are from $\{1,\dots,\Delta\}^d\subset \mathbb{R}^d$. We aim at algorithms whose running time is polylogarithmic in $\Delta$ and $n$ (we allow for multiple insertions of a point). 

We consider two different settings: In a low-dimensional setting, we
will assume $d$ to be a constant. This is relevant when we evaluate the update time of our data structure, as in this case the $O$-notation will swallow any function that only depends on $d$. We will make this
clear by writing $O_d()$ to denote the $O$-notation when $d$ is considered to be a constant. In the high-dimensional case, $d$ is not considered to be constant. We may, however, assume that $d=O(\log n)$ as we
can use the Johnson-Lindenstrauss lemma to embed the points in $O(nd\log n)$ time in $O(\log n)$ dimensions in such a way that distances are preserved by a small constant. 

\subsection{The Hierarchical Diameter \kk-Clustering Problem}

We start by defining the diameter $k$-clustering problem. Let $P \subseteq \{1,\dots,\Delta\}^d$ be a point set. The diameter $\diam(C)$ of a subset $C\subseteq P$ is defined as $\diam(C) = \max_{p,q \in C} \|p-q\|_2$.
The goal of the diameter $k$-clustering problem is to find a partition of the input point set $P$ into $k$ subsets $C_1,\dots, C_k$ such that the maximum diameter of the $k$ subsets is minimized.
We will write a partition of $P$ as a set $\mathcal C$ that contains $k$ disjoint sets $C_1,\dots,C_k$ whose union is $P$. With this definition our objective is to minimize
$$
\cost(P,\mathcal C)= \cost_{\diam}(P,\mathcal C) = \max_i \diam(C_i).
$$
A partition $\mathcal C =\{C_1,\dots,C_k\}$ is a refinement of a partition $\mathcal D=\{D_1,\dots,D_\ell\}$, if for every $C_i$ there is one $D_j$ such that $C_i \subseteq D_j$. If $\ell< k$ then we say that $\mathcal C$ is
a proper refinement of $\mathcal D$. A hierarchical clustering of a points set $P =\{p_1,\dots,p_n\}$ is a sequence of partitions $\mathcal C_1,\dots, \mathcal C_n$ such that $\mathcal C_1 =\{P\}$, $\mathcal C_n = \big\{\{p_1\},\dots,\{p_n\}\big\}$
and such that every $\mathcal C_i$ is a proper refinement of $\mathcal C_{i-1}$ that is obtained by splitting one cluster of $\mathcal C_{i-1}$ into two clusters and keeping the other clusters unchanged.
We can describe such a hierarchical clustering by a binary tree: The leaves of the tree are the points of the input point set and the root corresponds to the whole point set $P$. Every inner node corresponds to a cluster
that contains all points located in its subtree. An inner node $v$ of the tree is labeled by the first index $i$ such that the partition $\mathcal C_i$ does not contain the nodes in the subtree of $v$ in a single cluster.
Such a tree is also called a \emph{dendrogram}.

We denote the diameter of a optimal partition into $k$ clusters by $\opt_{\diam}^k$. 
A partition $\mathcal C_A$ is an $\alpha$-approximation to the diameter $k$-clustering problem if $\cost(P,\mathcal C_A) \le \alpha \cdot \opt_{\diam}^k$.
A hierarchical clustering $\mathcal C_1,\dots,\mathcal C_n$ is a \emph{pointwise} $\alpha$-approximation to the hierarchical $k$-clustering problem if for \emph{every} $k, 1\le k \le n,$ we have that
$\mathcal C_k=\{C_1,\dots,C_k\}$ is an $\alpha$-approximation to the diameter $k$-clustering problem, i.e., we have a hierarchical clustering that in each step is approximately as good as the best non-hierarchical clustering. 

\paragraph{Relations to the \kk-center problem}
The $k$-center problem is to find a set $\mathcal{C}\subseteq P$ of $k$-centers that minimize the
maximum distance to the nearest center, i. e., that minimizes 
\[
\cost_{\cen}(P,\mathcal{C}) = \max_{p\in P} \min_{c\in C} \|p-c\|_2.
\] 
By definition, we know that $\cost_{\cen}(P,\mathcal{C}) \le \cost_{\diam}(P,\mathcal{C}) \le 2 \cdot \cost_{\cen}(P,\mathcal{C})$.
The same is also true when the centers are chosen from $\mathbb{R}^d$ instead of $P$.
We denote the cost of an optimal $k$-center clustering with $k$ centers by $\opt_{\cen}^k$.

For the $k$-center problem, a solution consists of centers, and the clusters are induced by assigning points to their closest center. For the diameter $k$-clustering problem, a solution is a partition, and no centers are associated. However, our data structure maintains a special point for each clusters in both cases. We adopt the view of diameter $k$-clustering in the following, and therefore speak of \emph{representatives} of clusters. When viewed as a $k$-center clustering, the representatives serve as centers. In general, our exposition is focused on the $k$-diameter case, while only making slight distinctions where necessary for $k$-center.

\section{Pointwise approximate hierarchical clusterings}\label{sec:truncateddendrogram}

First we define the hierarchical clustering that we want to maintain.
%
%
Let us recall the $2$-approximation algorithm for $k$-center due to Hochbaum and Shmoys~\cite{HS86}. Given a point set $P$ and a guess $\tau$ for the optimum radius, this algorithm computes a maximal independent set $I$ in the graph where two points are connected iff their distance is at most $2\tau$. If $|I|\le k$, then it defines a $k$-center solution of cost $2\tau$. And if $\tau$ was guessed correctly, then $|I|$ cannot be larger than $k$: In a $k$-center solution with radius $\tau$, two points have distance $> 2\tau$ only if they are in different clusters, so there cannot be more than $k$ points with pairwise distance $> 2 \tau$.

We adopt an idea from McCutchen and Khuller~\cite{MK08}. The paper~\cite{MK08} obtains a streaming $8$-approximation for $k$-center for one predetermined value of $k$. The key idea to make this work is to maintain a lower bound $\ell$ on the optimum cost and relate this to the cost of the solution with $\le k$ centers that they keep in memory. 
Whenever $k+1$ points with pairwise distance at least $\ell$ have been found, $\ell$ is doubled and a maximum independent set is computed in a very similar way to the algorithm of Hochbaum and Shmoys. 
We observe that performing a similar strategy (in a non-streaming setting) can be used to obtain a hierarchical clustering which is a pointwise $8$-approximation.

More precisely, we show that we can compute a nested family of subsets of $P$ that satisfies the following conditions, and that this family induces a pointwise approximate hierarchical clustering.

\begin{definition}\label{def:truncated}
For a point set $P \in \{1,\ldots,\Delta\}^d$, we call a nested family $(P_i)_{i=0}^M$, $M:= \log (\sqrt{d}\cdot(\Delta-1))$, \emph{$\alpha$-good} if the following conditions hold:
\begin{itemize}
\item[(a)]
  $P_0 = P$, $P_i \subseteq P_{i-1}$, and  $P_M = \{p\}$ for some $p \in P$,
\item[(b)]
  Every pair of distinct points $p,q \in P_i$ has distance more than $2^i$,
\item[(c)]
  For $1\le i \le M$, every point in $P_{i-1} \setminus P_i$ has distance at most $\alpha\cdot 2^{i}$ to some point in $P_i$.
\end{itemize}
\end{definition}

In a static setting, we can compute a $1$-good family $(P_1)_{i=0}^M$ in the following way. 
We start with $P_0 = \{\{x\}\mid x \in P\}$. This satisfies condition (a) by definition, it satisfies condition (b) for $P_0$ since all points have a pairwise distance of at least $1=2^0$, and condition (c) is not applicable to $P_0$.

Now for any $i\in \{1,\ldots,M\}$, we construct $P_{i}$ from $P_{i-1}$: We greedily compute a maximum independent set in a graph $G_{i-1}$ with vertex set $P_{i-1}$ where we connect two points $x,y$ iff $||x-y||_2 \le 2^{i}$, and set $P_{i}$ be this independent set. By definition, $P_{i} \subseteq P_{i-1}$, i.e., condition (a) holds. Furthermore, any two points in $P_{i}$ are independent in $G_i$, which means that their distance is more than $2^{i}$. Thus, condition (b) is true for all points in $P_{i}$. Finally, every point in $P_{i-1}$ that was not promoted to $P_{i}$ has a neighbor in $P_{i}$, i.e., a point at distance $2^{i}$. Thus, condition (c) is true for all points in $P_{i-1} \setminus P_{i}$.

This process ends at $P_M$: The largest possible distance between two points is $\sqrt{d}\cdot(\Delta-1)$. Thus, if we pick an arbitrary point, then any other point is at distance $\le \sqrt{d}\cdot(\Delta-1)=2^M$. Thus, by condition (b), $P_M$ can not contain more than one point, and this is why we stop at this level (and why it is well-defined to demand that $P_M$ contains only one point).

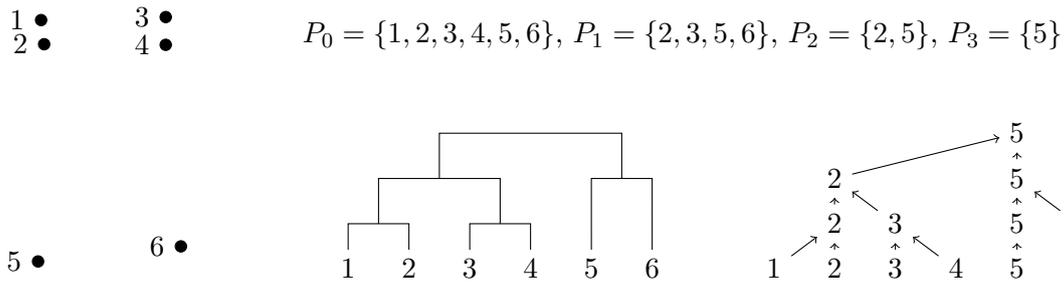
\begin{figure}
\centering
\begin{tikzpicture}[scale=0.8]
\node [circle, draw, inner sep=0cm, minimum size=0.15cm, fill,label=left:{$1$}] at (-0.05,0) {};
\node [circle, draw, inner sep=0cm, minimum size=0.15cm, fill,label=left:{$2$}] at (0,-0.4) {};
\node [circle, draw, inner sep=0cm, minimum size=0.15cm, fill,label=left:{$3$}] at (2,0.05) {};
\node [circle, draw, inner sep=0cm, minimum size=0.15cm, fill,label=left:{$4$}] at (2,-0.41) {};
\node [circle, draw, inner sep=0cm, minimum size=0.15cm, fill,label=left:{$5$}] at (-0.1,-4) {};
\node [circle, draw, inner sep=0cm, minimum size=0.15cm, fill,label=left:{$6$}] at (2.25,-3.75) {};


\node at (10.5,-0.25) {$P_0=\{1,2,3,4,5,6\}$, $P_1=\{2,3,5,6\}$, $P_2=\{2,5\}$, $P_3=\{5\}$};

\begin{scope}[yscale=1.5,yshift=1.25cm,xshift=12cm]
\node (b11) at (0,-4) {$1$};
\node (b21) at (1,-4) {$2$};
\node (b31) at (2,-4) {$3$};
\node (b41) at (3,-4) {$4$};
\node (b51) at (4,-4) {$5$};
\node (b61) at (5,-4) {$6$};

\begin{scope}[yshift=0.5cm]
\node (b22) at (1,-4) {$2$};
\node (b32) at (2,-4) {$3$};
\node (b52) at (4,-4) {$5$};
\node (b62) at (5,-4) {$6$};
\end{scope}

\begin{scope}[yshift=1cm]
\node (b23) at (1,-4) {$2$};
\node (b53) at (4,-4) {$5$};
\end{scope}

\begin{scope}[yshift=1.5cm]
\node (b54) at (4,-4) {$5$};
\end{scope}

\draw [->] (b11) -- (b22);
\draw [->] (b21) -- (b22);
\draw [->] (b31) -- (b32);
\draw [->] (b41) -- (b32);
\draw [->] (b51) -- (b52);
\draw [->] (b61) -- (b62);

\draw [->] (b22) -- (b23);
\draw [->] (b32) -- (b23);
\draw [->] (b52) -- (b53);
\draw [->] (b62) -- (b53);

\draw [->] (b23) -- (b54);
\draw [->] (b53) -- (b54);

\end{scope}
\begin{scope}[yscale=1.5,yshift=1.25cm,xshift=5cm]
\node (b1) at (0,-4) {$1$};
\node (b2) at (1,-4) {$2$};
\node (b3) at (2,-4) {$3$};
\node (b4) at (3,-4) {$4$};
\node (b5) at (4,-4) {$5$};
\node (b6) at (5,-4) {$6$};
\draw (b1) -- (0,-3.5) -- (1,-3.5) -- (b2);
\draw (b3) -- (2,-3.5) -- (3,-3.5) -- (b4);
\draw (0.5,-3.5) -- (0.5,-3) -- (2.5,-3) -- (2.5,-3.5);
\draw (b5) -- (4,-3) -- (5,-3) -- (b6);
\draw (1.5,-3) -- (1.5,-2.5) -- (4.5,-2.5) -- (4.5,-3);
\end{scope}
\end{tikzpicture}
\caption{A point set $P$, an $\alpha$-good set family, and two ways to represent this family as a tree. The middle one shows a compacted dendrogram, where multiple merges can happen at the same time. The right one shows a representation that is closer to the set family.\label{fig:truncateddendrogram}}
\end{figure}

An $\alpha$-good set family has two beneficial properties. 
\paragraph{Number of levels.} 
Firstly, it has exactly $M+1 \in \mathcal{O}(\log \Delta)$ levels\footnote{Notice that the $P_i$ in our definition are not necessarily different. It may well be that there is a $P_i$ where all points already have pairwise distance $2^{i+1}$ (or even higher), and that we then have $P_{i+1}=P_i$ (or even multiple identical levels). This could be removed by further condensing, yet we prefer this version since it simplifies the exposition. Since we do not remove identical levels, the number of levels of the tree is always exactly $M+1$.}. 
This is beneficial for maintaining the family in short update time. Figure~\ref{fig:truncateddendrogram} shows an example where we have a point set, an $\alpha$-good set family, and two tree structures based on it. The middle representation shows a variant of a dendrogram where multiple merges can happen at the same \lq height\rq, when multiple points are left out from one $P_{i-1}$ to $P_i$. This visualizes how the dendrogram gets compressed such that $M$ levels are sufficient.

\paragraph{Approximate Clusterings.}
Secondly, an $\alpha$-good family implies a pointwise approximate hierarchical clustering. We prove this statement in the following. Assume that we have access to a function $p(i,x)$ which for any $i \in \{1,\ldots,M\}$ and any $x \in P_{i-1}$ gives us a point in $P_i$ at distance at most $\alpha 2^{i}$. We call $p(i,x)$ the \emph{parent} of $x$ on level $i$. This corresponds to the tree structure on the right in Figure~\ref{fig:truncateddendrogram}.
Next we define a function $p^i(x)$ which gives the representative of the cluster that $x$ belongs to in the clustering represented by $P_i$. We get this function by setting $p^0(x)=x$, $p^1(x) = p(1,x)$ and then recursively setting $p^i(x) = p(i,p^{i-1}(x))$ for $i=\{2,\ldots,M\}$. Notice that $p^i(x) \in P_i$. 
%
The appealing property of compacted dendrograms is that $p^i(x)$ is always close to $x$ compared to the distance lower bound associated to level $i$. More precisely, the following is true:

\begin{lemma}\label{lem:distToParent}
  For every point $x \in P$, $\|x-p^{i}(x)\|_2 \le \alpha \cdot 2^{i+1}$.
\end{lemma}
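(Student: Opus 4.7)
The plan is to do a straightforward induction on $i$, driven by the triangle inequality, using condition (c) of Definition~\ref{def:truncated} at each level.

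The base case $i=0$ is immediate since $p^{0}(x) = x$, so $\|x - p^{0}(x)\|_2 = 0 \le \alpha \cdot 2$. For $i=1$, we have $p^{1}(x) = p(1,x)$, and condition (c) guarantees $\|x - p(1,x)\|_2 \le \alpha \cdot 2^{1}$, which is at most $\alpha \cdot 2^{2}$ as required. (Strictly speaking, if $x$ already lies in $P_{i}$, then $p(i,x) = x$ and the bound is trivially zero; otherwise condition (c) applies.)

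For the inductive step, I would assume the claim for $i-1$ and write
\[
\|x - p^{i}(x)\|_2 \;=\; \|x - p(i, p^{i-1}(x))\|_2 \;\le\; \|x - p^{i-1}(x)\|_2 + \|p^{i-1}(x) - p(i, p^{i-1}(x))\|_2.
\]
The first summand is bounded by $\alpha \cdot 2^{i}$ by the induction hypothesis. The second summand is bounded by $\alpha \cdot 2^{i}$ by condition (c) of Definition~\ref{def:truncated}, since $p^{i-1}(x) \in P_{i-1}$ and $p(i, \cdot)$ returns a level-$i$ parent at distance at most $\alpha \cdot 2^{i}$. Adding these gives $\alpha \cdot 2^{i} + \alpha \cdot 2^{i} = \alpha \cdot 2^{i+1}$, finishing the induction.

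There is no real obstacle; the only thing to be careful about is the convention that $p(i,y) = y$ when $y$ already lies in $P_{i}$ (so that condition (c), which is only stated for points in $P_{i-1} \setminus P_{i}$, covers every case handled by the recursion). Equivalently, one can recognize the bound as the telescoping geometric sum $\sum_{j=1}^{i} \alpha \cdot 2^{j} = \alpha(2^{i+1} - 2) \le \alpha \cdot 2^{i+1}$, which slightly strengthens the stated inequality but the form in the lemma is cleaner.
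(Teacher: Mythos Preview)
Your proof is correct and matches the paper's argument: the paper uses the same triangle-inequality telescoping that you spell out at the end, bounding $\|x-p^{i}(x)\|_2$ by $\sum_{j=1}^{i}\alpha\cdot 2^{j}\le \alpha\cdot 2^{i+1}$, which your induction merely unrolls step by step. Your remark about the convention $p(i,y)=y$ when $y\in P_i$ is a fair clarification that the paper leaves implicit.
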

\begin{proof}
  By triangle inequality and the fact that by definition of $p(z,j)$, $\|z-p(z,j)\|_2 \le \alpha \cdot 2^{j}$ for all $j \in \{1,\ldots,M\}$, $z\in P_{j-1}$, we have that
	\begin{align*}
	\|x-p^i(x)\|_2 
	\le \|x-p^1(x)\|_2 + \|x-p^2(x)\|_2+ \ldots \|x-p^i(x)\|_2
	\le& 2^1 + 2^2 + \ldots +2^{i}\\
	=& \sum_{j=0}^i \alpha \cdot 2^{j} 
	\le \alpha \cdot 2^{i+1}.\qedhere
	\end{align*}
\end{proof}

Now we set $C_i(z) := \{x \in P \mid p^i(x) = z\}$ for all $i \in \{0,\ldots,M\}$ and all $z \in P_i$. This is the cluster represented by the copy of $z$ on level $i$: It contains all unique points in the subtree rooted at $z$ on level $i$. This can change with the level: For example,  in Figure~\ref{fig:truncateddendrogram}, $C_0(2)=\{2\}$, $C_1(2) = \{1,2\}$ and $C_2(2) = \{1,2,3,4\}$ (and $C_3(2)$ is not defined). 
For every $i$, we get a clustering with $|P_i|$ clusters $\mathcal{C}_i = \{ C_i(x) \mid x \in P_i\}$. For $k$-center we choose $P_i$ as the center set for this clustering.

We prove that for every $i$, $\mathcal{C}_i$ is an $8\alpha$-approximation for the best diameter $k$-clustering solution with $k=|P_i|$ clusters. Even more, we show that $\mathcal{C}_i$ is also a good clustering for all $k$ that are larger than $|P_i|$, but strictly smaller than $|P_{i-1}|$. This is because $|P_{i-1}|$ already constitutes the lower bound which makes $\mathcal{C}_i$ a good clustering in comparison.

\begin{lemma}\label{lem:approxguarantee}
Let $i \in \{1,\ldots,M\}$ with $|P_i| < |P|$, and let $j \in \{0,\ldots,i-1\}$ be the largest index for which $|P_j| > |P_i|$. 
Then $\cost(P,\mathcal{C}_i) \le 8 \alpha \cdot \opt^{k}_{\diam}(P)$ and $\cost_{\cen}(P,P_i) \le 8 \alpha \cdot \opt_{\cen}^{k}(P)$ for all $k$ with $|P_i| \le k < |P_j|$.
\end{lemma}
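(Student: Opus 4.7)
The plan is to exploit a redundancy forced by the choice of $j$: since $|P_{j+1}|\le|P_i|$ by maximality of $j$ while $P_i\subseteq P_{j+1}$ by nesting, we must have $P_{j+1}=P_{j+2}=\cdots=P_i$. In other words, the family is flat from level $j+1$ through level $i$. I would pick the parent function canonically on this plateau, setting $p(i',x)=x$ whenever $x\in P_{i'}$ (which is allowed because $x$ is at distance $0\le\alpha\cdot 2^{i'}$ from itself). Then for every $x\in P$ we have $p^i(x)=p^{j+1}(x)$, and hence $C_i(z)=C_{j+1}(z)$ for every $z\in P_i$.

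With this setup in hand, I would apply Lemma~\ref{lem:distToParent} at level $j+1$ to get $\|x-p^{j+1}(x)\|_2\le\alpha\cdot 2^{j+2}$ for every $x\in P$. This yields at once $\cost_{\cen}(P,P_i)\le\alpha\cdot 2^{j+2}$, and by the triangle inequality $\diam(C_i(z))\le\alpha\cdot 2^{j+3}$ for every $z\in P_i$, so $\cost(P,\mathcal{C}_i)\le\alpha\cdot 2^{j+3}$.

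The matching lower bound comes from condition~(b) of Definition~\ref{def:truncated} applied to $P_j$: it contains $|P_j|>k$ points whose pairwise distances strictly exceed $2^j$. By pigeonhole, any $k$-partition of $P$ must place two of them in the same part, so $\opt^{k}_{\diam}(P)>2^j$; analogously, since two points in the same cluster of a $k$-center solution lie within twice the optimum cost, $\opt^{k}_{\cen}(P)>2^{j-1}$. Combining these with the upper bounds gives $\cost(P,\mathcal{C}_i)\le\alpha\cdot 2^{j+3}=8\alpha\cdot 2^j\le 8\alpha\cdot\opt^{k}_{\diam}(P)$ and, likewise, $\cost_{\cen}(P,P_i)\le 8\alpha\cdot\opt^{k}_{\cen}(P)$, for every $k$ with $|P_i|\le k<|P_j|$.

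The main obstacle is the first step. A direct application of Lemma~\ref{lem:distToParent} at level $i$ yields only $\alpha\cdot 2^{i+2}$ on the cluster diameter, whose ratio against the lower bound $2^j$ scales with $i-j$ and can fail to be constant whenever the family is flat over many consecutive levels. Spotting the equality $P_{j+1}=P_i$ and insisting on a canonical parent choice on the plateau is the refinement that closes the gap and delivers the uniform factor $8\alpha$ across the whole range $|P_i|\le k<|P_j|$.
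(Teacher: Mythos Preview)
Your argument is correct and follows the same route as the paper: establish $P_{j+1}=P_i$ from the maximality of $j$, apply Lemma~\ref{lem:distToParent} at level $j+1$ for the upper bounds $\alpha\cdot 2^{j+2}$ (radius) and $\alpha\cdot 2^{j+3}$ (diameter), and use condition~(b) on $P_j$ for the matching lower bounds $2^{j-1}$ and $2^j$. You are in fact more explicit than the paper on one point: to transfer the diameter bound from the clustering induced by $p^{j+1}$ to $\mathcal{C}_i$ itself one needs $p^i=p^{j+1}$, which holds under the canonical convention $p(i',x)=x$ for $x\in P_{i'}$; the paper relies on this tacitly (it is consistent with its tree figures and its data-structure implementation), whereas you spell it out.
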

\begin{proof}
Fix an arbitrary $k$ with $|P_i| \le k < |P_j|$. Notice that $j < i$ by definition, and that $j$ is well-defined since $|P_i| < |P|$ and $|P_0|=|P|$.
Since $|P_j| > k$, we know by property $(b)$ of Definition \ref{def:truncated} that there are at least $k+1$ points of pairwise distance at least $2^j$ in $P$.
In every $k$-clustering $\mathcal C^*$, there must be a cluster that contains two of these points. The diameter of this cluster is at least $2^j$, so $\opt^k_{\diam}(P) \ge 2^j$,
and the radius is at least $2^{j-1}$, so $\opt^k_{\cen}(P) \ge 2^{j-1}$.

We know by property (a) that $P_j \subseteq P_{j+1} \subseteq \ldots \subseteq P_i$, and since $j$ is the largest index with $|P_j| > |P_i|$, we know in particular that $P_{j+1} = P_i$.

Furthermore, by Lemma~\ref{lem:distToParent}, we know that for every $x \in P$ it holds that $\|x-p^{j+1}(x)\|_2 \le \alpha \cdot 2^{j+1+1}$. Since $p^{j+1}(x) \in P_{j+1}$, this means
that every point in $P$ has a point at distance at most $\alpha \cdot 2^{j+2}$ in $P_{j+1}=P_i$. Thus, using $P_i$ as a center set for $k$-center yields a solution of radius
$\alpha \cdot2^{j+2} \le 8 \alpha \opt_{\cen}^k(P)$. 
Also, the distance between any pair of points having the same parent in $P_i$ is at most $\alpha \cdot2^{j+3}$, so if we define that a cluster consists of all points having the same parent,
we get $|P_i|$ clusters with a maximum diameter of $\alpha \cdot2^{j+3}\le 8\alpha \cdot \opt^k_{\diam}(P)$.
\end{proof}

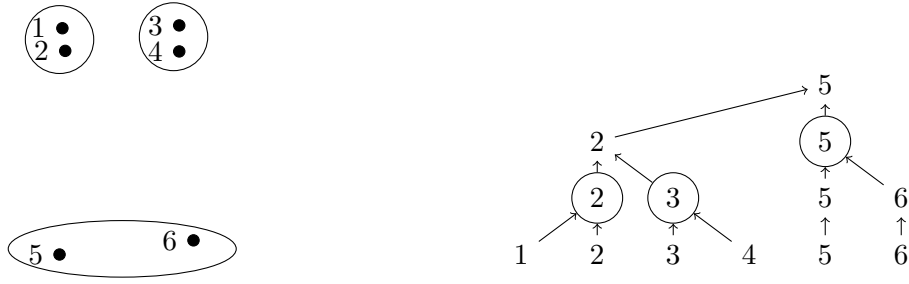
\begin{figure}
\centering
\begin{tikzpicture}
\begin{scope}[scale=0.75]
\node [circle, draw, inner sep=0cm, minimum size=0.15cm, fill,label=left:{$1$}] at (-0.05,0) {};
\node [circle, draw, inner sep=0cm, minimum size=0.15cm, fill,label=left:{$2$}] at (0,-0.4) {};
\node [circle, draw, inner sep=0cm, minimum size=0.15cm, fill,label=left:{$3$}] at (2,0.05) {};
\node [circle, draw, inner sep=0cm, minimum size=0.15cm, fill,label=left:{$4$}] at (2,-0.41) {};
\node [circle, draw, inner sep=0cm, minimum size=0.15cm, fill,label=left:{$5$}] at (-0.1,-4) {};
\node [circle, draw, inner sep=0cm, minimum size=0.15cm, fill,label=left:{$6$}] at (2.25,-3.75) {};

\draw (-0.1,-0.2) circle (0.6cm);
\draw (1.9,-0.15) circle (0.6cm);
\draw (1,-3.9) ellipse (2cm and 0.5cm);
\end{scope}

\begin{scope}[yscale=1.5,yshift=2cm,xshift=6cm]
\node (b11) at (0,-4) {$1$};
\node (b21) at (1,-4) {$2$};
\node (b31) at (2,-4) {$3$};
\node (b41) at (3,-4) {$4$};
\node (b51) at (4,-4) {$5$};
\node (b61) at (5,-4) {$6$};

\begin{scope}[yshift=0.5cm]
\node [draw, circle]  (b22) at (1,-4) {$2$};
\node [draw, circle]  (b32) at (2,-4) {$3$};
\node (b52) at (4,-4) {$5$};
\node (b62) at (5,-4) {$6$};
\end{scope}

\begin{scope}[yshift=1cm]
\node (b23) at (1,-4) {$2$};
\node [draw, circle] (b53) at (4,-4) {$5$};
\end{scope}

\begin{scope}[yshift=1.5cm]
\node (b54) at (4,-4) {$5$};
\end{scope}

\draw [->] (b11) -- (b22);
\draw [->] (b21) -- (b22);
\draw [->] (b31) -- (b32);
\draw [->] (b41) -- (b32);
\draw [->] (b51) -- (b52);
\draw [->] (b61) -- (b62);

\draw [->] (b22) -- (b23);
\draw [->] (b32) -- (b23);
\draw [->] (b52) -- (b53);
\draw [->] (b62) -- (b53);

\draw [->] (b23) -- (b54);
\draw [->] (b53) -- (b54);

\end{scope}

\end{tikzpicture}
\caption{The example from Figure~\ref{fig:truncateddendrogram}, and a clustering for $k=3$ which results from assuming that $3$ is the first in the ordering of $P_1 \backslash P_2$. The resulting partition is $\{1,2\},\{3,4\},\{5,6\}$. If $6$ was first in the ordering, we would get $\{1,2,3,4\}, \{5\}, \{6\}$.\label{fig:howtheclusteringscomeabout}
}
\end{figure}
Finally, let us discuss how to obtain clusterings with exactly $k$ clusters for any $k$.
Let $k$ be a fixed number that is strictly between $|P_{i-1}|$ and $|P_i|$ for some $i$.
We know by Lemma~\ref{lem:approxguarantee} that $P_i$ induces a good clustering for this $k$. So one way of promoting this to a clustering with $k$ clusters would be to simply add $k-|P_i|$ points from $P\backslash P_i$ as singleton clusters which can only decrease the cost.
However, this would not lead to a hierarchical clustering anymore. 
We therefore opt for another way: We choose $k-|P_i|$ points $T$ from $P_{i-1} \backslash P_i$, based on a fixed ordering of the points. Then we add $C_{i-1}(y)$ to the clustering for all $y\in T$ while removing the points in $\cup_{y \in T} C_{i-1}(y)$ from the other clusters. Let $\mathcal{C}_i^T$ be the resulting clustering, and set $P_i^T = P_i \cup T$. An example for this is depicted in Figure~\ref{fig:howtheclusteringscomeabout}.

\begin{corollary}
Let $i \in \{1,\ldots,M\}$ with $|P_i| < |P|$, and let $j \in \{0,\ldots,i-1\}$ be the largest index for which $|P_j| > |P_i|$. Let $k$ be such that $|P_i| \le k < |P_j|$ and let $T \subseteq P_j \backslash P_i$ be a set with $|T|=k-|P_i|$.
Then $\cost(P,\mathcal{C}_i^T) \le 8 \alpha \cdot \opt^{k}_{\diam}(P)$ and $\cost_{\cen}(P,P_i^T) \le 8 \alpha \cdot \opt_{\cen}^{k}(P)$.
\end{corollary}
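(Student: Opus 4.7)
The plan is to reduce the statement to Lemma~\ref{lem:approxguarantee} by showing that $\mathcal{C}_i^T$ is a refinement of $\mathcal{C}_i$ and that $P_i^T$ is a superset of $P_i$; both then transfer the cost bound from the coarser structure to the finer one.

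First I would recall (exactly as in the proof of Lemma~\ref{lem:approxguarantee}) that $P_{j+1}=P_i$, since the family is nested and $j$ is the largest index with $|P_j|>|P_i|$. Consequently, for any $y\in P_j\setminus P_i$ the level-$(j{+}1)$ parent $p(j+1,y)$ lies in $P_i$. I would then argue that for every $x$ with $p^j(x)=y$ the recursive definition $p^{i}(x)=p(i,p^{i-1}(x))$ yields $p^i(x)=p(j+1,y)\in P_i$, because on the collapsed range $P_{j+1}=P_{j+2}=\cdots=P_i$ the parent function fixes the points of $P_i$. In particular, the set $C_j(y)=\{x\in P\mid p^j(x)=y\}$ (which is what the text calls $C_{i-1}(y)$ when intermediate levels coincide) is contained entirely in the single $\mathcal{C}_i$-cluster $C_i(p(j+1,y))$.

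With that in hand, $\mathcal{C}_i^T$ is obtained from $\mathcal{C}_i$ by splitting off the subclusters $C_j(y)$ for $y\in T$ from the clusters that originally contained them. This is a proper refinement of $\mathcal{C}_i$, so the maximum diameter cannot increase, and Lemma~\ref{lem:approxguarantee} gives
\[
\cost_{\diam}(P,\mathcal{C}_i^T)\;\le\;\cost_{\diam}(P,\mathcal{C}_i)\;\le\;8\alpha\cdot \opt_{\diam}^{k}(P).
\]
For the $k$-center bound, $P_i^T=P_i\cup T\supseteq P_i$, and enlarging the center set can only decrease every point's distance to its nearest center, so
\[
\cost_{\cen}(P,P_i^T)\;\le\;\cost_{\cen}(P,P_i)\;\le\;8\alpha\cdot \opt_{\cen}^{k}(P),
\]
again by Lemma~\ref{lem:approxguarantee}. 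Finally, a short sanity check confirms that $\mathcal{C}_i^T$ has exactly $k$ clusters: we added $|T|=k-|P_i|$ new clusters and no existing cluster becomes empty unless we separately remove it (and for the cost bound this is irrelevant, since removing empty sets does not affect the maximum diameter).

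The only real obstacle is the bookkeeping in the first paragraph: reconciling the notation $C_{i-1}(y)$ used in the prose with the fact that in the corollary $y$ may come from $P_j\setminus P_i$ with $j<i-1$. Once one observes that collapsed levels $P_{j+1}=\cdots=P_i$ force the parent map to be the identity on $P_i$, the refinement claim is immediate and the rest of the argument is a one-line invocation of Lemma~\ref{lem:approxguarantee}.
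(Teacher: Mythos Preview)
Your argument is correct and matches the paper's approach: both observe that passing from $\mathcal{C}_i$ to $\mathcal{C}_i^T$ (respectively from $P_i$ to $P_i^T$) cannot increase the cost, and then invoke the bound already established in Lemma~\ref{lem:approxguarantee}. The paper phrases this as ``removing points can only decrease the diameter/radius'' together with a separate remark that the new clusters are even smaller via Lemma~\ref{lem:distToParent}, whereas you package it as a single refinement/monotonicity step; your extra bookkeeping about $C_{i-1}$ versus $C_j$ when levels collapse is a legitimate clarification of the paper's notation.
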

\begin{proof}
  This follows from the proof of Lemma~\ref{lem:approxguarantee}: The lower bound on $\opt^k_{\cen}(P)$ and $\opt^k_{\diam}(P)$ is unchanged, radius / diameter of the clusters in $\mathcal{C}_i$ can only decrease by removing points, and the radius / diameter
  of the new clusters is even smaller, since Lemma~\ref{lem:distToParent} gives a better upper bound for representatives stemming from $P_j$.
\end{proof}

\section{A Data Structure for Points in Low-Dimensional Space}\label{sec:lowdim}

The goal of this section is to present a dynamic data structure which supports the following operations.

\begin{itemize}
\item
  {\sc Insert}$(p)$: A new point $p\in \{1,\dots,\Delta\}^d$ is inserted into $P$. We allow multiple copies of $p$.
\item
  {\sc Delete}$(p)$: A point $p \in \{1,\dots,\Delta\}^d$ is deleted from $P$. If $p$ is not present in the current set $P$, this will be reported.
\item
  {\sc Cluster}$(p,k)$: A representative of the cluster $C_i$ that contains $p$ in the $k$-clustering $\mathcal C_k$ is returned.
\end{itemize}

\paragraph{General structure.}
We realize this data structure by maintaining an $\alpha$-good family of sets $(P_i)_{i=0}^M$ that satisfies the conditions in Definition~\ref{def:truncated} for $\alpha = 2$. We organize these sets in a tree as
shown on the right of Figure~\ref{fig:truncateddendrogram} and \ref{fig:howtheclusteringscomeabout}. 
For this, for every $x\in P_{i-1}\setminus P_i$ we maintain a pointer to a $y=p^i(x)$, more precisely to the copy of $y$ in $P_i$. Notice that $y=x$ is possible, then the pointer just points to the \lq next\rq\ copy of $x$.
Additionally, we also maintain backward pointers, i.e., every $x \in P_i$ has a list of pointers to its children.

Finally, for every $i \in \{1,\ldots,M\}$, we give each point its insertion time as a fixed identifier and use this for the fixed ordering of the points in $P_{i-1} \setminus P_{i}$. We will maintain the point identifiers for the points in $P_{i-1} \setminus P_{i}$
in a binary search tree so we can find points based on the ordering.

\paragraph{Computing the Cluster Representative.}

In order to find the cluster representative of a point we follow its pointers in the above described tree structure until we encounter a vertex that is a cluster representative.  

A point is a cluster representative for a given $k$, if one of the following two things happens: The point is in the set $P_i$ with $|P_i| \le k$ that has the smallest index $i$, or it is one of the $r:=k-|P_i|$
additional points that we choose according to the ordering of the points in $P_{i-1} \setminus P_i$. We can detect the first case when we reach a level where $|P_i|\le k$. For the second case, we need more information.
The binary search tree for every set $P_{i-1} \setminus P_{i}$ allows us to compute the rank of a point in the fixed ordering in $O(\log n)$ time. For every $P_j$ we also maintain its size, so that we can compute $k-|P_i|$ in constant time.
Now we can check whether a point in $P_{i-1} \setminus P_i$ is a cluster representative by comparing the rank to $k-|P_i|$. We choose the first $k-|P_i|$ points to cluster representatives. 

The running time consists of following at most $\mathcal{O}(\log \Delta)$ pointers plus doing potentially one rank computation in $\mathcal{O}(\log n)$,
so we get a running time of $\mathcal{O}_d(\log \Delta + \log n)$. 
\begin{lemma}
The operation {\sc Cluster}$(p,k)$ can be implemented to run in $\mathcal{O}_d(\log \Delta + \log n)$ time. 
\end{lemma}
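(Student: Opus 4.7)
The plan is to turn the informal description above the lemma into an explicit algorithm and then account for each operation. Starting at the copy of $p$ on level $0$, I would walk up the compacted dendrogram by following parent pointers, maintaining the current ancestor $q = p^{j}(p)$ at level $j$. At each step I consult the stored sizes $|P_{j+1}|$ and $|P_j|$ (which are maintained with the data structure); as long as $|P_{j+1}| > k$, the current $q$ cannot be a representative of the $k$-clustering $\mathcal C_k^T$, so I follow $q \mapsto p(j+1,q)$ and continue. This loop terminates at the smallest level $i$ for which $|P_i| \le k$, i.e., when $|P_{i-1}| > k \ge |P_i|$ (or in the degenerate case $|P_0| \le k$, in which case $p$ itself is the representative).

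Once the walk stops, I have $q = p^{i-1}(p) \in P_{i-1}$, and two candidate representatives: $q$ itself (good iff $q \in T$) and its parent $p(i,q) = p^i(p) \in P_i$. I would settle this with a single lookup: if $q \in P_i$ (detected in $O(1)$ since the parent pointer is either to a distinct copy or to $q$'s own copy at level $i$), then $q \notin T$ but $q \in P_i \subseteq P_i^T$, so return $q$. Otherwise $q \in P_{i-1}\setminus P_i$, and I ask the binary search tree storing $P_{i-1}\setminus P_i$ for the rank of $q$ in the fixed ordering; if this rank is at most $k - |P_i|$ then $q \in T$ and $q$ is returned, otherwise I return $p(i,q) \in P_i$.

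For correctness it suffices to observe that the clustering $\mathcal C_k^T$ is exactly $\mathcal C_i^T = \{C_{i-1}(y) : y \in T\} \cup \{C_i(z) \setminus \bigcup_{y\in T} C_{i-1}(y) : z \in P_i\}$, so a point's representative is $p^{i-1}(p)$ if the latter lies in $T$ and $p^i(p) \in P_i$ otherwise, which is precisely what the procedure returns. The running time splits into the walk and the rank query: the walk traverses at most $M+1 = O(\log(\sqrt{d}(\Delta-1))) = O_d(\log \Delta)$ pointers, each followed in constant time together with a constant-time size comparison, and at most one rank query on a balanced binary search tree of size at most $n$ runs in $O(\log n)$. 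Summing, the total cost is $O_d(\log \Delta + \log n)$, as claimed.

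There is no real obstacle here beyond being careful at the transition level $i$, where one must distinguish the two sub-cases $q \in P_i$ and $q \in P_{i-1}\setminus P_i$ so that exactly one rank query suffices; all other work is $O(1)$ per level, and accessing the sizes $|P_j|$ in constant time is justified by the book-keeping set up when the data structure was described.
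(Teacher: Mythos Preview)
Your proposal is correct and follows essentially the same approach as the paper: walk up the pointer tree using the stored sizes $|P_j|$ to locate the transition level $i$ with $|P_{i-1}|>k\ge|P_i|$, then perform a single rank query in the binary search tree for $P_{i-1}\setminus P_i$ to decide between returning $p^{i-1}(p)$ and $p^{i}(p)$. Your write-up is somewhat more explicit about the two sub-cases at level $i$ and about why only one rank query is needed, but the argument and the running-time accounting are the same as the paper's.
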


\paragraph{Insertions.}
In order to describe how insertions are performed, we need to first say how we store the sets $P_i$. For every $0 \le i \le d \log \Delta$ we store $P_i$ in a hash table.\footnote{Using dynamic perfect hashing \cite{DKMMRT94} or cuckoo hashing \cite{PR04}, it is possible to maintain hash maps with constant worst-case search time and constant expected amortized insertion and deletion time.}
The key of a point is the cell of a grid of diameter $2^i/\sqrt{d}$ that contains the point. 
When we insert a point $p$, we start with $P_0$  where the point is always inserted into the hash table. 
Now let $i \ge 1$. 
In order to determine whether $p$ will be inserted in $P_i$ we use our hash table.
We query the hash table for all grid cells that could contain a point in distance $2^i$.
All such cells are covered by a box of width $2\cdot 2^i + 2^i / \sqrt{d}$, so the number of queries to the hash table is bounded by $(2\sqrt{d}+1)^d \in O_d(1)$.
If the hash table is not empty for one of the keys, we take the first point that we find, and add a pointer to it.
Furthermore, we add the point identifier to the binary search tree for $P_{i-1} \backslash P_i$ since the point was not inserted into $P_i$.
If we find no point, then we insert the search point into the hash table for $P_i$ and proceed with the next $i$.

Observe that if we add $p$ to $P_i$, then we did so because we found no point in distance $2^i$, which ensures condition (b) of $\alpha$-good set families. Furthermore, when we add a pointer to a point $q$ in $P_i$
instead of adding $p$, then $q$ is at distance at most $2^i+2/\sqrt{d} = (1+1/(2^{i-1}\sqrt{d}))\cdot 2^i$. So we satisfy condition (c) for $\alpha = (1+1/(2^{i-1}\sqrt{d})) < 2$.

In the worst case, we have to insert the point in $\Theta(\log \Delta)$ levels, and have to insert it into $\Theta(\log \Delta)$ binary search trees, so the worst case insertion time is $O_d(\log \Delta \log n)$.

\begin{lemma}
Insertions can be processed in expected amortized time $O_d(\log \Delta \log n)$.
\end{lemma}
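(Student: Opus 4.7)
The plan is to charge the cost of a single {\sc Insert} operation to each of the $M+1 = O(\log \Delta)$ levels separately and sum. At level $0$ the point is simply added to the level-$0$ hash table, which by the footnote's choice of dynamic perfect hashing (or cuckoo hashing) takes expected amortized $O(1)$ time.

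For each $1 \le i \le M$, the work decomposes into two pieces. First, the procedure queries the level-$i$ hash table on the $(2\sqrt{d}+1)^d \in O_d(1)$ grid cells that could contain a point within distance $2^i$ of $p$; this bound on the number of cells is already established in the paragraph preceding the lemma. Each hash-table lookup costs expected amortized $O(1)$. Second, depending on the outcome of these lookups, exactly one follow-up action is performed: either $p$ is inserted into the level-$i$ hash table (one more hash operation, expected amortized $O(1)$), or $p$'s insertion-time identifier is inserted into the balanced binary search tree that stores $P_{i-1}\setminus P_i$ in the fixed ordering (worst-case $O(\log n)$). So the expected amortized cost at level $i$ is $O_d(1) + O(\log n) = O_d(\log n)$, and multiplying by the $O(\log \Delta)$ levels yields the claimed bound $O_d(\log \Delta \log n)$.

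The only subtlety is book-keeping the sources of randomness and amortization: the hash-table guarantees are both expected and amortized, while the BST work is deterministic worst case, so the overall running time inherits the weaker ``expected amortized'' guarantee that the lemma states. I do not anticipate this to be the hard part—everything reduces to stacking the per-level cost. Correctness of the update, namely that the resulting $(P_i)_{i=0}^M$ is still an $\alpha$-good family with $\alpha = 1 + 1/(2^{i-1}\sqrt d) < 2$, has already been verified in the text immediately before the lemma, and therefore does not need to be re-established inside the proof.
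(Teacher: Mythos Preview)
Your proposal is correct and follows essentially the same argument as the paper, whose ``proof'' is simply the sentence preceding the lemma: there are $O(\log\Delta)$ levels, at each level one does $O_d(1)$ hash-table probes plus at most one hash insert or one BST insert costing $O(\log n)$, and the product gives $O_d(\log\Delta\log n)$. You have merely made the per-level accounting and the ``expected amortized'' bookkeeping more explicit than the paper does.
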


\paragraph{Deletions.}

To delete a point $p$ we first remove it from all $P_i$ and all binary search trees that it is contained in. 
This takes time $\mathcal{O}(\log \Delta + \log \Delta\cdot \log n)$.

Then we need to check whether property (c) is violated. On each level $i$, we iterate through the children of $p$ (this is why we maintain backward edges). For each child point $q$, we continue the insertion process that was interrupted when $p$ was found as a close point on level $i$. This means that we try to insert $q$ into $P_i$, and, if successful, into further levels, just as in the insertion process. The running time for performing this operation is at most the running time of inserting a point normally, i.e., $\mathcal{O}_d(\log \Delta \log n)$.

By the same argumentation as for the insertions, the number of children of a point is $\mathcal{O}_d(1)$. However, since we delete the point from potentially all $\mathcal{O}(\log \Delta)$ levels, we have to process $\mathcal{O}_d(\log \Delta)$ child points in the worst case.

\begin{lemma}
Deletions can be processed in expected amortized time $\mathcal{O}_d(\log^2\Delta \log n)$.
\end{lemma}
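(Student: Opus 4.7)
The plan is to bound the three sources of cost separately: the physical removal of $p$ from all tables it appears in, the enumeration of points whose parent pointer on some level was $p$, and the reinsertions needed to re-establish condition (c) of Definition~\ref{def:truncated}.

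First I would account for the pure removal cost. The point $p$ appears in at most $M+1 = O(\log \Delta)$ hash tables (one per level) and in at most $O(\log \Delta)$ binary search trees (one for each $P_{i-1} \setminus P_i$ that contains it). With dynamic perfect hashing each hash-table deletion costs expected amortized $O(1)$, and each binary search tree deletion costs $O(\log n)$, contributing $O(\log \Delta \log n)$ in total. Since the removals do not change the parent pointers of other points, only condition (c) can be violated, and only for points whose parent on some level was $p$.

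Second, using the backward-pointer lists, I enumerate the children of $p$ on each level in time proportional to their number. For each affected child $q$ on level $i$, we restart the insertion process for $q$ beginning at level $i$: probe the $O_d(1)$ relevant grid cells of the hash table for $P_i$ to find a new close representative, and, failing that, promote $q$ to $P_i$ and move up the levels exactly as in an ordinary insertion. The cost of one such restart is therefore dominated by the insertion cost $O_d(\log \Delta \log n)$ proved in the preceding lemma.

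The main quantitative step is bounding the number of children to be repaired. By condition (b) applied at level $i-1$, any two distinct points of $P_{i-1}$ are at pairwise distance greater than $2^{i-1}$. Each child $q$ of $p$ on level $i$ lies at distance at most $\alpha \cdot 2^i \le 2 \cdot 2^i$ from $p$ by the invariant guaranteed at insertion time. A standard volume-packing argument in $\mathbb{R}^d$, using disjoint balls of radius $2^{i-2}$ around the children contained in a ball of radius $2 \cdot 2^i + 2^{i-2}$ around $p$, therefore bounds the number of children per level by $O_d(1)$. Summing over the $O(\log \Delta)$ levels on which $p$ was present yields $O_d(\log \Delta)$ repair operations, for total repair cost $O_d(\log \Delta) \cdot O_d(\log \Delta \log n) = O_d(\log^2 \Delta \log n)$.

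The step I expect to be most delicate is verifying that the amortized analysis composes correctly, since a deletion itself triggers a sequence of reinsertions. The point is that each reinsertion only performs additional hash-table insertions into strictly higher-indexed tables and never triggers another deletion, so it is indistinguishable from an ordinary insertion from the hash table's perspective and the amortized insertion bound applies unchanged. Adding the removal cost $O(\log \Delta \log n)$ to the repair cost $O_d(\log^2 \Delta \log n)$ yields the claimed expected amortized deletion time.
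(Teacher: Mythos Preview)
Your proposal is correct and follows essentially the same three-step decomposition as the paper: bound the removal cost, bound the per-child reinsertion cost by the insertion bound, and bound the number of children per level by $O_d(1)$ to get $O_d(\log \Delta)$ total repairs. The only cosmetic difference is that the paper justifies the $O_d(1)$ children-per-level bound by pointing back to the grid-cell counting from the insertion paragraph, whereas you spell out an equivalent volume-packing argument; your extra paragraph on why the amortized hash-table bounds compose through the cascade of reinsertions is a detail the paper leaves implicit.
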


Since the family that we maintain is $\alpha$-good for $\alpha < 2$, we get the following result.

\thmLowDim*

\section{In High Dimension}\label{sec:highdim}

We now consider the high dimensional case. We assume that the dimension is $O(\log n)$ since the Johnson-Lindenstrauss lemma allows us to project points to $O(\log n)$ dimensions without
distorting the distances by more than a constant with high probability. In the following, we will assume that the current number of points $n$ is approximately known to the algorithm (say, upto a factor of 2)
and we rebuild the data structure, if this number changes by more than this factor. This does not change the amortized expected cost of the data structure operations. 

We derive a $\alpha$-good family of sets $(P_i)_{i=0}^M$ for $\alpha = 2 d \ell$ according to Definition~\ref{def:truncated}. 
Similar to the low dimensional case, we use pointers for every $p \in P_i$ to a point $q\in P_{i+1}$ that has distance at most $d \ell 2^{i+1}$ from $p$. However, in the high dimensional case, we only maintain the pointers implicitly. They can be obtained from the data
structures storing the $P_i$.

\subsection{Maintaining the \texorpdfstring{$P_i$}{Pi}}

We now describe how to maintain the sets $P_i$ in high dimensions. Let $R=2 \cdot \sqrt{d} \cdot 2^i$. Consider a randomly shifted grid with cell width $R$.
\begin{lemma}
  Let $p = (p_1,\dots,p_d),q=(q_1,\dots,q_d) \in \RR^d$ be two points with distance $\|p-q\|_2 \le 2^i$. 
  Consider a randomly shifted axis-aligned grid with cell width $R = 2 \cdot \sqrt{d}\cdot 2^i$.
  Then 
  $$
  \Pr{\text{$p$ and $q$ are not in the same grid cell}} \le \frac{1}{2}. 
  $$
\end{lemma}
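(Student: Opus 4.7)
The plan is to analyze the event coordinate by coordinate, exploiting the fact that a randomly shifted axis-aligned grid can be described as independent uniform shifts $s_1, \ldots, s_d \in [0, R)$, one per coordinate, so that the grid cell of a point $x$ is determined by the tuple $(\lfloor (x_j + s_j)/R \rfloor)_{j=1}^d$. Two points $p$ and $q$ therefore lie in the same cell if and only if $\lfloor (p_j + s_j)/R \rfloor = \lfloor (q_j + s_j)/R \rfloor$ for every coordinate $j$. So the event that $p$ and $q$ fall in different cells is the union, over $j \in \{1, \ldots, d\}$, of the events $E_j$ that some grid line along coordinate $j$ lies strictly between $p_j$ and $q_j$.

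Next, I would compute $\Pr{E_j}$ for a single coordinate. Since $|p_j - q_j| \le \|p-q\|_2 \le 2^i \le R$, and $s_j$ is uniform in $[0, R)$, the standard one-dimensional calculation gives $\Pr{E_j} = |p_j - q_j|/R$. A union bound over $j$ then yields
\begin{equation*}
  \Pr{\text{$p$ and $q$ are not in the same grid cell}} \le \sum_{j=1}^d \frac{|p_j - q_j|}{R}.
\end{equation*}

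To finish, I would pass from the $\ell_1$-sum of coordinate differences back to the $\ell_2$-distance using Cauchy--Schwarz: $\sum_{j=1}^d |p_j - q_j| \le \sqrt{d} \cdot \|p - q\|_2 \le \sqrt{d} \cdot 2^i$. Plugging in $R = 2 \sqrt{d} \cdot 2^i$ gives the claimed bound of $1/2$.

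There is no real obstacle here; the only thing to be careful about is spelling out that the random shift can be taken to be coordinatewise independent (otherwise the per-coordinate computation and the union bound would need to be re-justified), and verifying that $|p_j - q_j| \le R$ so that the one-dimensional separation probability is indeed $|p_j - q_j|/R$ rather than being capped at $1$. The factor $2\sqrt{d}$ in $R$ is precisely what is needed to absorb the $\sqrt{d}$ loss from Cauchy--Schwarz and still leave a factor of $2$ for the final probability bound.
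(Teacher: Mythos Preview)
Your proposal is correct and follows exactly the same approach as the paper: a union bound over coordinates gives $\sum_j |p_j-q_j|/R = \|p-q\|_1/R$, then the inequality $\|p-q\|_1 \le \sqrt{d}\,\|p-q\|_2$ (your Cauchy--Schwarz step) and the choice of $R$ yield the bound $1/2$. The paper compresses this into a single displayed chain of inequalities, while you spell out the per-coordinate computation more carefully; as a minor remark, independence of the shifts is not actually needed since the union bound only uses the marginal uniformity of each $s_j$.
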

\begin{proof}
  $$
   \Pr{\text{$p$ and $q$ are not in the same grid cell}} \le \sum_{1\le i\le d}\frac{|p_i-q_i|}{R} = \frac{\|p-q\|_1}{R} \le \sqrt{d} \frac{\|p-q\|_2}{R} \le \frac{1}{2}.
  $$
\end{proof}

For each level $i$ we maintain $g=O(\log n)$ shifted grids with side length $R$ as defined in the above lemma. In order to compute $P_i$ from $P_{i-1}$ we first observe
that any pair of points in $P_{i-1}$ with distance at most $2^i$ is with high probability in the same grid cell in at least one of the grids. Therefore, during the computation
of $P_i$ from $P_{i-1}$ we will make sure that there are no two points in $P_i$ that are both in the same grid cell for one of the grids. This will ensure the second
property of our data structure, i.e. every pair of distinct points $p,q\in P_i$ has distance at least $2^i$.

To compute $P_i$ from $P_{i-1}$ we maintain a sequence of subsets $P_{i,0},\dots, P_{i,\ell}$ such that $P_{i,0} = P_{i-1}$ and $P_{i,j}$ is obtained from $P_{i,j-1}$ by sampling every point
from $P_{i,j-1}$ independently and uniformly at random with probability $n^{-\frac{1}{\ell}}$. Furthermore, we define $P_{i,\ell+1} = \emptyset$. We say that a point $p\in P_{i,j}$ is \emph{covered},
if there is a point $q\in P_{i,j+1}$ that is in the same grid cell as $p$ in one of the grids. The remaining points are called \emph{uncovered}. From the uncovered points we select a maximal
subset $I_{i,j}$ such that no two points in $I_{i,j}$ are contained in the same grid cell for some of the grids. Finally, we define $P_i =\bigcup_j I_{i,j}$.

\begin{lemma}
For every $p\in P_{i-1}$ there exists a point $q \in P_i$ such that $\|p-q\|_2 \le \sqrt{d} \cdot R \ell$.
\end{lemma}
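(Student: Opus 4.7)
The plan is to prove the statement by induction on $j$, showing that for every $p\in P_{i,j}$ there exists some $q\in P_i$ with $\|p-q\|_2\le (\ell-j+1)\,\sqrt{d}\,R$. Taking $j=0$ and using $P_{i,0}=P_{i-1}$ then yields the claim (up to the constant $+1$, which the statement absorbs into $\ell$ by choosing the $(\ell+1)$-fold subsampling chain appropriately). The key geometric fact underlying every step is that any two points lying in the same cell of one of the grids are at distance at most $\sqrt{d}\,R$, since each cell is an axis-aligned cube of side length $R$.

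For the base case $j=\ell$, recall that $P_{i,\ell+1}=\emptyset$, so no point of $P_{i,\ell}$ can be covered. Hence every $p\in P_{i,\ell}$ is uncovered, and either $p$ already lies in $I_{i,\ell}\subseteq P_i$ (distance $0$), or by the maximality of $I_{i,\ell}$ among uncovered points avoiding grid-cell collisions, there must exist $q\in I_{i,\ell}$ sharing a grid cell with $p$ in one of the $g$ grids, which gives $\|p-q\|_2\le \sqrt{d}\,R$.

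For the inductive step at level $j<\ell$, I would distinguish three cases for $p\in P_{i,j}$. If $p\in I_{i,j}\subseteq P_i$, take $q=p$. If $p$ is uncovered but $p\notin I_{i,j}$, then maximality of $I_{i,j}$ forces the existence of $q\in I_{i,j}\subseteq P_i$ in the same grid cell as $p$, so $\|p-q\|_2\le \sqrt{d}\,R$. If $p$ is covered, then by definition there is a point $p'\in P_{i,j+1}$ sharing a grid cell with $p$, so $\|p-p'\|_2\le \sqrt{d}\,R$; by the inductive hypothesis applied to $p'$, there is a $q\in P_i$ with $\|p'-q\|_2\le (\ell-j)\,\sqrt{d}\,R$, and the triangle inequality gives $\|p-q\|_2\le (\ell-j+1)\,\sqrt{d}\,R$.

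The only real subtlety is to make sure the case distinction is exhaustive and that the worst-case path (being covered at every level down to $P_{i,\ell}$, and then needing a final maximality hop at level $\ell$) is counted correctly; this gives at most $\ell+1$ grid-cell diameters in total, matching the intended bound of order $\sqrt{d}\,R\,\ell$. No probabilistic argument is needed here: this lemma is deterministic given the construction of the $I_{i,j}$, so the random shift of the grids only comes in when arguing the lower-bound condition (b) of $\alpha$-goodness elsewhere.
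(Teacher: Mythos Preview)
Your proof is correct and follows essentially the same approach as the paper's: both walk from $P_{i,0}$ up through the chain $P_{i,1},\dots,P_{i,\ell}$, using covering to advance one level and maximality of $I_{i,j}$ to land in $P_i$, with each hop costing at most $\sqrt{d}R$. You simply make the recursion explicit as a downward induction on $j$, and you correctly flag the off-by-one (the worst case yields $\ell+1$ hops, not $\ell$); the paper's terse ``applying this argument recursively'' glosses over this, but it is harmless since the surrounding results only need an $O(d\ell)$ factor.
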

\begin{proof}
Let $p \in P_{i-1}$. If $p\in P_i$ we are done. Thus, let us assume that $p\notin P_i$. In this case, either $p$ is covered or uncovered. If $p$ is uncovered, by definition of the $I_{i,j}$
there is a point $q\in P_i$ within distance $\sqrt{d} \cdot R$. Thus, let us assume $p$ is covered and $p\in P_{i,j}$. Then we know that there is another point $q\in P_{i,j+1}$ within distance
$\sqrt{d}R$. Applying this argument recursively and using that $P_{i,\ell+1} = \emptyset$, the lemma follows.
\end{proof}

\subsubsection{The Data Structure}

For every grid and every set $P_{i,j}$ we store the points using a hash table whose keys are the grid cells, i.e. a point $p$ is hashed to the bucket whose key is the grid cell that contains $p$.
For each bucket we maintain a second hash table and a doubly connected list that stores all points that fall into the same grid cell.
For each point in the hash table we maintain a pointer to its occurence in the list. This way, we can insert and delete points in
expected $O(d)$ time and we can return some point from the cell, if it is non-empty by returning the first list item.
We also store for each cell the point from $I_{i,j}$ it contains.

\subsubsection{Insertions}

We first argue how to implement insertions. Let $p$ be the point that is inserted. We start by inserting $P$ into the sets $P_{i,j}$. This
is simply done by choosing and storing a random number between $0$ and $1$. If this number is at most $n^{-j/\ell}$ then the point is
inserted into $P_{i,j}$. Note that if we condition on $p$ being in $P_{i,j}$ we have that the random value under this conditioning
is uniformly from $[0,n^{-j/\ell}]$. This implies that under this conditioning, $p$ in $P_{i,j+1}$ with probability $n^{-1/\ell}$.

We first update all hash tables by inserting $p$. Then we potentially need to update the sets $I_{i,j}$. 
We start by describing the insertion for the case that $p \in P_{i,j}$ but not in $P_{i,j+1}$. If $p$ is contained in a cell with a point from $P_{i,j+1}$ then we know that $p$ is covered
and we do not need further updates. Otherwise, we check whether $p$ is contained in a cell that contains another point from a set $I_{i,j}$. If this is the case, we are done. Otherwise,
we add $p$ to $I_{i,j}$ and remember for every grid cell of $p$ that it belongs to $I_{i,j}$ and we are done.

Now consider the case that $p$ is in $P_{i,j}$ and $P_{i,j+1}$. If for every cell of $p$ there is already a point $q\in P_{i,j+1}$ in the same cell, then we are done. Now consider any cell
that contains $p$ and that did not have a point from $P_{i,j+1}$ before. This means that all remaining points in $P_{i,j}$ in the same cell become covered and therefore we remove every point $q$
from $I_{i,j}$ that shares a cell with $p$. We then need to consider all points in the cells that contain $q$ and that do not contain a point from $P_{i,j+1}$ and for each point check whether it has to be inserted into $I_{i,j}$.
This can be done in the same way as in the first case (the points are not in $P_{i,j+1}$, which also means that the process does not cascade any more).

\begin{lemma}
Let $X_1,\dots, X_N$ be independent $0$-$1$-random variables with $\Pr{X_i=1} = \frac{1}{n^{1/\ell}}$. Then $\Pr{\sum_{i=1}^N X_i = 0} \le e^{-\frac{N}{2n^{1/\ell}}}$.
\end{lemma}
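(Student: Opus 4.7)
The plan is a one-line application of a standard elementary inequality, since the factor of $2$ in the denominator of the exponent is slack built into the statement. First I would use independence to factor the probability as
\[
\Pr{\sum_{i=1}^N X_i = 0} \;=\; \prod_{i=1}^N \Pr{X_i = 0} \;=\; \left(1 - \frac{1}{n^{1/\ell}}\right)^{N}.
\]

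Next I would invoke the standard inequality $1 - x \le e^{-x}$, valid for all $x \in \mathbb{R}$ (and in particular for $x = 1/n^{1/\ell} \in [0,1]$), which follows from the convexity of $e^{-x}$ and the tangent line at $x=0$. Raising both sides to the $N$-th power yields
\[
\left(1 - \frac{1}{n^{1/\ell}}\right)^{N} \;\le\; e^{-N/n^{1/\ell}}.
\]

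Finally I would just observe that $N/n^{1/\ell} \ge N/(2n^{1/\ell})$, so
\[
e^{-N/n^{1/\ell}} \;\le\; e^{-N/(2n^{1/\ell})},
\]
which is the claimed bound. There really is no obstacle here: the lemma is slightly weaker than what the direct calculation gives, and the extra factor $1/2$ presumably exists to align cleanly with constants that appear in the subsequent analysis (so that the bound can be quoted uniformly without tracking tighter constants). If I wanted to avoid $1-x \le e^{-x}$, an equivalent route is to take logarithms and use $\ln(1-x) \le -x$, but the argument above is the cleanest.
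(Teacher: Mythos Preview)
Your proof is correct. The paper itself dispenses with this lemma in one sentence, writing only that it ``follows immediately from Chernoff bounds.'' Your approach is different in that you bypass Chernoff entirely and compute the probability directly via independence and the inequality $1-x \le e^{-x}$. This is strictly more elementary: the event $\{\sum_i X_i = 0\}$ is the lower tail at zero, where the exact probability $(1-p)^N$ is available in closed form, so there is no need to invoke a general concentration inequality. Your route also makes transparent that the factor $1/2$ in the exponent is slack (you actually obtain $e^{-N/n^{1/\ell}}$), whereas citing Chernoff obscures this. Either argument is adequate for the paper's purposes; yours is the cleaner one here.
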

\begin{proof}
  Follows immediately from Chernoff bounds.
\end{proof}

\begin{corollary}
  Let $Q\subseteq P_{i,j}$ be a set of $N$ points inside the same grid cell. Consider an algorithm that processes $Q$ in time $O(1)$, if a point from $Q$
  is in $P_{i,j+1}$ and $\beta N$ otherwise. Then the expected running time to process $Q$ is $O(\beta \min\{N,n^{1/\ell}\})$.
\end{corollary}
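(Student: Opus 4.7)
The plan is to directly combine the preceding lemma with a case analysis on $N$ versus $n^{1/\ell}$. First I would note that, conditioned on $Q \subseteq P_{i,j}$, each point of $Q$ is included in $P_{i,j+1}$ independently with probability $n^{-1/\ell}$ (this is exactly the subsampling rule defining $P_{i,j+1}$ from $P_{i,j}$). Let $X_1,\dots,X_N$ be the corresponding $0$-$1$ indicators and let $S = \sum_i X_i$. By the preceding lemma,
\[
\Pr{S = 0} \le e^{-N/(2 n^{1/\ell})}.
\]

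Second, I would write the expected running time as
\[
\Ex{T} \le O(1)\cdot \Pr{S\ge 1} + \beta N \cdot \Pr{S = 0} \le O(1) + \beta N\cdot e^{-N/(2 n^{1/\ell})}.
\]
Now split on whether $N \le n^{1/\ell}$ or $N > n^{1/\ell}$. In the first regime, $\min\{N,n^{1/\ell}\} = N$ and the bound $\beta N \cdot \Pr{S=0} \le \beta N = O(\beta \min\{N, n^{1/\ell}\})$ is immediate. In the second regime, $\min\{N,n^{1/\ell}\} = n^{1/\ell}$ and it suffices to show $N\cdot e^{-N/(2 n^{1/\ell})} = O(n^{1/\ell})$.

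For the latter, I would substitute $t := N/n^{1/\ell} \ge 1$, so that
\[
N\cdot e^{-N/(2 n^{1/\ell})} = n^{1/\ell} \cdot t\, e^{-t/2},
\]
and the function $t\mapsto t e^{-t/2}$ is bounded by the constant $2/e$ on $t \ge 0$. Therefore $N\cdot e^{-N/(2n^{1/\ell})} \le (2/e)\, n^{1/\ell}$, giving $\Ex{T} = O(\beta n^{1/\ell}) = O(\beta \min\{N,n^{1/\ell}\})$ in this regime as well. There is no real obstacle here; the only subtle point is remembering that we condition on $Q \subseteq P_{i,j}$ so that the $X_i$ are genuinely independent Bernoullis with parameter $n^{-1/\ell}$, which is needed to invoke the previous lemma.
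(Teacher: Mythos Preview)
Your argument is correct and is precisely the intended one: the paper states this as an immediate corollary of the preceding Chernoff-type lemma without writing out a proof, and your derivation---bounding $\Ex{T}\le O(1)+\beta N\cdot\Pr{S=0}$, invoking the lemma, and splitting on $N\lessgtr n^{1/\ell}$ with the elementary bound $te^{-t/2}\le 2/e$---is exactly the computation the authors leave implicit.
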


By the above corollary we obtain that the expected time for insertion in the case that $p$ is also in $P_{i,j+1}$ is $O(dn^{1/\ell})$ per grid cell. Since the probability for
a point to be contained in $P_{i,j+1}$ is $n^{-1/\ell}$, the expected running time of inserting the point in one level is $O(dg)$ and the time to insert it into the data structure
for $P_i$ is $O(d g \ell)$.

\subsubsection{Deletions}

Now let us consider the case that $p$ is deleted. Consider the case that $p \in P_{i,j}$ but not in $P_{i,j+1}$. If $p$ is not in $I_{i,j}$ we can simply delete it from all hash tables
and we are done. Otherwise, we delete it and we need to update all other points that share a cell with $p$. This requires $O(d n^{1/\ell} g)$ time in expectation.
If $p$ is also in $P_{i,j+1}$ we consider all cells that contain $p$. If a cell does not contain another point from $P_{i,j+1}$ we update all remaining points from this cell, otherwise, we do not need to update the points.
This takes $O(dn^{1/\ell}g^2)$ time in expectation, because there are at most $g$ cells from which we need to process $O(n^{1/\ell})$ points in expectation.

\subsubsection{Finding Pointers between the \texorpdfstring{$P_i$}{Pi} and \texorpdfstring{$P_{i+1}$}{Pi+1}}

In contrast to the low-dimensional case, we do not store pointers between $P_i$ and $P_{i+1}$ explicitly. Instead, we show that we
can compute in $O(d g \ell)$ time such a pointer from our data structure. This can be done as follows for a point $p\in P_i= P_{i+1,0}$.
We first check whether
$p\in P_{i+1}$. This can be done by querying all grid cells that contain $p$ for all $P_{i+1,j}$ in time $O(dg\ell)$.
If $p \in P_{i+1}$ we have found our pointer. If this is not the case, we check for all grids whether $p$ is in the same grid cell as a
point $q$ from $P_{i+1,1}$. If this is the case, we know that the distance between $p$ and $q$
is at most $\sqrt{d} 2^{i+1}$. We do not know whether $q \in P_{i+1}$, so we apply this procedure recursively until we find a
point in $P_{i+1}$. This point has distance at most $\sqrt{d} \ell 2^{i+1}$ from $p$ by the triangle inequality.
Finally, if there does not exist a point $q \in P_{i+1}$ a point in the same grid cell as $p$ then either $p\in P_{i+1}$ or there
exists a point $q\in P_{i+1}$ in the same grid cell as $p$. Then $q$ has distance at most $\sqrt{d} 2^{i+1}$ and we found our pointer.
Overall, the time to find the pointer is $O(dg\ell)$. 

\thmHighDim*
\begin{proof}
  The success probability depends only on the question whether every pair at distance at most $2^i$ is contained in the same grid cell for at least one of the grids. We know
  that the probability of this event is $\frac{1}{2^g}$, so for $g\ge 10 \log n$ we get that this is simultanuously true for all $n^2$ pairs with probability at least $1-1/n$.
  The approximation guarantee follows similarly to the low-dimensional case. The expected insertion time follows from the fact that our data structure has $O(d \log \Delta)$ levels
  and the previous discussions.
\end{proof}

\bibliographystyle{amsalpha}
\bibliography{references}

\end{document}